\newtheorem{Theorem}{Theorem}
\newtheorem{Proposition}{Proposition}
\newdefinition{Definition}{Definition}
\newtheorem{Remark}{Remark}
\newtheorem{Example}{Example}
\newtheorem{Algorithm}{Algorithm}
\newproof{proof}{Proof}
\newproof{pot}{Proof of Theorem \ref{thm2}}
\journal{Automatica}
\begin{document}

\begin{frontmatter}
\thispagestyle{empty}


 \title{Bisimilarity Enforcing Supervisory Control for Deterministic
Specifications \tnoteref{label1}}

%

\author[Singapore]{Yajuan Sun}\ead{sunyajuan@nus.edu.sg},    
\author[USA]{Hai Lin}\ead{hlin1@nd.edu},               
\author[Singapore]{Ben M. Chen}\ead{bmchen@nus.edu.sg}  

\address[Singapore]{Dept. of Electrical and Computer Engineering, National University of Singapore, Singapore}  
\address[USA]{Dept. of Electrical Engineering, University of
Notre Dame, USA}

\begin{abstract}

This paper investigates the supervisory control of
nondeterministic discrete event systems to enforce bisimilarity
with respect to deterministic specifications. A notion of
synchronous simulation-based controllability is introduced as a
necessary and sufficient condition for the existence of a
bisimilarity enforcing supervisor, and a polynomial algorithm is
developed to verify such a condition. When the existence condition
holds, a supervisor achieving bisimulation equivalence is
constructed. Furthermore, when the existence condition does not
hold, two different methods are provided for synthesizing maximal
permissive sub-specifications.
\end{abstract}

\newpage
\setcounter{page}{1}

\begin{keyword}
Supervisory control \sep bisimulation \sep discrete event systems

\end{keyword}

\end{frontmatter}


\label{}

\section{INTRODUCTION}
The notion of bisimulation introduced by
\cite{milner1989communication} has been successfully used as a
behavior equivalence in model checking \citep{clarke1997model},
software verification \citep{chaki2004abstraction} and formal
analysis of continuous \citep{tabuada2004bisimilar}, hybrid
\citep{tabuada2004compositional} and discrete event systems
(DESs). What makes bisimulation appealing is its capability in
complexity mitigation and branching behavior preservation,
specially when we deal with large scale distributed and concurrent
systems such as multi-robot cooperative tasking, networked
embedded systems, and traffic management.



Therefore, recent years have seen increasing research activities
in employing bisimulation to DESs. References
\citep{barrett1998bisimulation}, \citep{komenda2005control} and
\citep{sumodel} used bisimulation for the control of deterministic
systems subject to language equivalence.
\cite{madhusudan2002branching} investigated the control for
bisimulation equivalence with respect to a partial specification,
in which the plant is taken to be deterministic and all events are
treated to be controllable. \cite{tabuada2008controller} solved
the controller synthesis problem for bisimulation equivalence in a
wide variety of scenarios including continuous system, hybrid
system and DESs, in which the bisimilarity controller is given as
a morphism in the framework of category theory.
\cite{zhou2006control} investigated the bisimilarity control for
nondeterministic plants and nondeterministic specifications. A
small model theorem was provided to show that a supervisor
enforcing the bisimulation equivalence between the supervised
system and the specification exists if and only if a state
controllable automaton exists over the Cartesian product of the
system and specification state spaces. This small model theorem
was also extended for partial observation in
\citep{zhou2007small}. In both these works, the existence of a
bisimilarity supervisor depends on the existence of a state
controllable automaton, which is hard to calculate in a systematic
way, and the complexity of checking the existence condition is
doubly exponential. To reduce the computational complexity,
\cite{zhoubisimilarity2011} specialized to deterministic
supervisors. The existence condition for a deterministic
bisimilarity supervisor considering nondeterministic plants and
nondeterministic specifications was identified. Moreover, the
synthesis of deterministic supervisors, feasible supspecifications
and infimal subspecifications were developed as well.
\cite{liu2011bisimilarity} introduced a simulation-based framework
upon which the bisimilarity control for nondeterministic plants
and nondeterministic specifications was studied. In particular, a
new scheme based on the simulation relation was proposed for
synchronization which is different from those commonly used
synchronization operators such as parallel composition and product
in the supervisory control literature.






This paper studies the supervisory control of nondeterministic
plants for bisimulation equivalence with respect to deterministic
specifications. Compared to the existing literature, the
contributions of this paper mainly lie on the following aspects.
First, a novel notion of synchronous simulation-based
controllability is introduced as a necessary and sufficient
condition for the existence of a bisimilarity enforcing
supervisor. Although it is equivalent to the conditions in
\citep{zhoubisimilarity2011} specialized to deterministic
specifications, it provides a great insight into what characters
should a deterministic specification possesses for bisimilarity
control. Second, a test algorithm is proposed to verify the
existence condition, which is shown to be polynomial complexity
(less than the complexity of the conditions in
\citep{zhoubisimilarity2011}). When the existence condition holds,
we further present a systematic way to construct bisimilarity
enforcing supervisors. Third, since a given specification does
always guarantee the existence of a bisimilarity enforcing
supervisor, a key question arises is how to find a maximal
permissive specification which enables the synthesis of
bisimilarity enforcing supervisors. To answer this question, we
investigate the calculation of supremal synchronously
simulation-based controllable sub-specifications by using two
different methods. One is based on a recursive algorithm and the
other directly computes such a sub-specification based on
formulas.




The rest of this paper is organized as follows. Section 2 gives
the preliminary and problem formulation. Section 3 presents the
synthesis of bisimilarity enforcing supervisors. Section 4
investigates the test algorithm for the existence of a
bisimilarity enforcing supervisor. Section 5 explores the
calculation of maximal permissive sub-specifications. This paper
concludes with section 6.

\section{Preliminary and Problem Formulation}

\subsection{Preliminary Results}
A DES is modeled as a nondeterministic automaton $G
=(X,\Sigma,x_{0},\alpha, X_{m})$, where $X$ is the set of states,
$\Sigma$ is the set of events, $\alpha : \!X \times \Sigma \!
\rightarrow 2^X$ is the transition function, $x_0$ is the initial
state and $X_m \subseteq X$ is the set of marked states. The event
set $\Sigma$ can be partitioned into $\Sigma$ = $\Sigma_{uc} \cup
\Sigma_{c}$, where $\Sigma_{uc}$ is the set of uncontrollable
events and $\Sigma_{c}$ is the set of controllable events. Let
$\Sigma^{*}$ be the set of all finite strings over $\Sigma$
including the empty string $\epsilon$. The transition function
$\alpha$ can be extended from events to traces, $\alpha : \!X
\times \Sigma^{*} \!\rightarrow 2^{X}$, which is defined
inductively as: for any $x \in X$, $\alpha(x, \epsilon)=x$; for
any $s\in \Sigma^{*}$ and $\sigma \in \Sigma$, $\alpha(x,
s\sigma)=\alpha(\alpha(x, s), \sigma)$. If the transition function
is a partial map $\alpha: \!X \times \Sigma \!\rightarrow X$, $G$
is said to be a deterministic automaton. For $X_1 \subseteq X$,
the notation $\alpha|_{X_1 \! \times \! \Sigma}$ means $\alpha$ is
restricted from a smaller domain $X_1 \! \times \!\Sigma$ to
$2^{X_1}$. Given $X_1 \subseteq X$, the subautomaton of $G$ with
respect to $X_1$, denoted by $F_{G}(X_1)$, is defined as: $
F_{G}(X_1)= (X_1,\Sigma,x_{0},\alpha_1,X_{m1})$, where $\alpha_1
\! = \!\alpha \!\mid_{X_1 \!\times \! \Sigma}$ and $X_{m1}$ = $X_1
\! \cap X_{m}$. The active event set at state $x$ is defined as
$E_{G}(x)=\{\sigma \in \Sigma~|~\alpha(x, \sigma)$ is defined\}.
Given a string $s \in \Sigma^{*}$, the length of the string $s$,
denoted as $|s|$, is the total numbers of events, and $s(i)$ is
the $i$-$th$ event of this string, where $1 \leq i \leq |s|$.
Given $\Sigma_1 \subseteq \Sigma$, a projection
$P_{\Sigma\!\rightarrow \Sigma_1}$: $\Sigma^{*}\! \rightarrow
\Sigma_{1}^{*}$ is used to filter a string of events from $\Sigma$
to $\Sigma_1$, and it is defined inductively as follows:
$P_{\Sigma\!\rightarrow \Sigma_1}(\epsilon)=\epsilon$; for any
$\sigma \in \Sigma$ and $s \in \Sigma^{*}$,
$P_{\Sigma\!\rightarrow \Sigma_1}(s\sigma)=P_{\Sigma\!\rightarrow
\Sigma_1}(s)\sigma$ if $\sigma \in \Sigma_1$, otherwise,
$P_{\Sigma\!\rightarrow \Sigma_1}(s\sigma)=P_{\Sigma\!\rightarrow
\Sigma_1}(s)$. The language generated by $G$ is defined as
$L(G)=\{s \in \Sigma^{*} \mid \alpha(x_0, s)$ is defined$\}$, and
the marked language generated by $G$ is defined as $L_{m}(G)=\{s
\in \Sigma^{*} \mid \alpha(x_0, s) \cap X_m \neq \emptyset$\}.
Consider three languages $K, K_1, K_2 \subseteq \Sigma^{*}$. The
Kleene closure of $K$, denoted as $K^{*}$, is the language
$K^*=\cup_{n \in \mathbb{N}}K^{n}$, where $K^{0}=\{\epsilon\}$ and
for any $n \geq 0$, $K^{n+1}=K^{n}K$. The prefix closure of $K$,
denoted as $\overline{K}$, is the language $\overline{K}=\{s \in
\Sigma^{*}~|~(\exists t \in \Sigma^{*})~st\in K\}$. The quotient
of $K_1$ with respect to $K_2$, denoted as $K_1/K_2$, is the
language $K_1/K_2=\{s \in \Sigma^{*}~|~(\exists t \in K_2)~st \in
K_1\}$. For two languages $K_1, K_2 \in \Sigma^{*}$ with $K_2
\subseteq K_1 \neq \emptyset$, let $G_{(K_1, K_2)}$ be a
deterministic automaton such that $L(G_{(K_1, K_2)})=K_1$ and
$L_{m}(G_{(K_1, K_2)})=K_2$. For a nondeterministic $G$, let
$det(G)$ be a minimal deterministic automaton such that
$L(det(G))=L(G)$ and $L_{m}(det(G))=L_{m}(G)$.

To model the interaction between automata, we introduce parallel
composition as below \citep{cassandras2008introduction}.

\begin{Definition}\label{parallel}
Given $G_1 =(X_1,\Sigma_1,x_{01},\alpha_1, X_{m1})$ and $G_2
=(X_2,\Sigma_2,x_{02},\alpha_2,X_{m2})$, the parallel composition
of $G_1$ and $G_2$ is an automaton
\[
G_1 || G_2 = ( X_1 \times X_2, \Sigma_1 \cup \Sigma_2,
\alpha_{1||2}, (x_{01}, x_{02}), X_{m1} \times X_{m2}),
\]
where for any $x_1 \in X_1$, $x_2 \in X_2$ and $\sigma \in
\Sigma$, the transition function is defined as:

\[
\alpha_{1||2}((x_1, x_2),\sigma) = \left\{ {\begin{array}{*{20}c}
   \alpha_1(x_1, \sigma) \times \alpha_2(x_2, \sigma) & {\sigma \in E_{G_1}(x_1) \cap E_{G_2}(x_2) };  \\
   \alpha_1(x_1, \sigma) \times \{x_2\} & {\sigma \in E_{G_1}(x_1) \cap \sigma \in E_1 \!\setminus E_2};  \\
   \{x_1\} \times \alpha_2(x_2, \sigma) & {\sigma \in E_{G_2}(x_2) \cap \sigma \in E_2 \!\setminus E_1};  \\
    \emptyset & {otherwise}.  \\
\end{array}} \right.
\]
\end{Definition}

When $\Sigma_1=\Sigma_2$, parallel composition can be understood
as a form of control, where a supervisor is designed to restrict
the behavior of the plant.

Next we present the synchronized state map, which is used to find
the synchronized state pairs of two automata
\citep{zhou2006control}.

\begin{Definition}
Given $G_1 =(X_1,\Sigma_1,x_{01},\alpha_1, X_{m1})$ and $G_2
=(X_2,\Sigma_2,x_{02},\alpha_2,X_{m2})$, the synchronized state
map $X_{synG_1G_2}$: $X_1 \rightarrow 2^{X_2}$ from $G_1$ to $G_2$
is defined as
\[
X_{synG_1G_2}(x_1)=\{x_2 \in X_2~|~(\exists s \in \Sigma^{*})~ x_1
\in \alpha_1(x_{01}, s) \wedge x_2 \in \alpha_2(x_{01}, s)\}.
\]
\end{Definition}


%





Most literature on supervisory control aims to achieve language
equivalence between the supervised system and the specification.
The necessary and sufficient condition for the existence of a
language enforcing supervisor is captured by the notion of
language controllability as below \citep{ramadge1987supervisory}.


\begin{Definition}\label{langc}
Given $G =(X,\Sigma,x_{0},\alpha, X_{m})$, a language $K \subseteq
L(G)$ is said to be language controllable with respect to $L(G)$
and $\Sigma_{uc}$ if
\[
\overline{K}\Sigma_{uc} \cap L(G) \subseteq \overline{K}.
\]
\end{Definition}





As a stronger behavior equivalence than language equivalence,
bisimulation is stated as follows \citep{milner1989communication}.
It is known that bisimulation implies language equivalence and
marked language equivalence, but the converse does not hold.

\begin{Definition}
Given $G_{1} =(X_{1},\Sigma,x_{01},\alpha_{1},X_{m1})$ and $G_{2}
=(X_{2},\Sigma,x_{02},\alpha_{2},X_{m2})$, a simulation relation
$\phi$ is a binary relation $\phi \subseteq X_1 \times X_2$ such
that $(x_{1}, x_{2}) \in \phi$ implies:
\begin{enumerate}
\item[(1)] $(\forall \sigma \in \Sigma)[\forall x_{1}^{'} \in
\alpha_{1}(x_{1},\sigma)\Rightarrow \exists x_{2}^{'} \in
\alpha_{2}(x_{2},\sigma)$ such that $(x_{1}^{'},x_{2}^{'}) \in
\phi]$;

\item[(2)] $x_{1} \in X_{m1} \Rightarrow x_{2} \in X_{m2}$.
\end{enumerate}
\end{Definition}

If there is a simulation relation $\phi$ $\subseteq$ $X_{1} \times
X_{2}$ such that $(x_{01},x_{02}) \in \phi$, $G_{1}$ is said to be
simulated by $G_{2}$, denoted by $G_{1} \prec_{\phi} G_{2}$. For
$\phi \subseteq (X_1 \cup X_2)^{2}$, if $G_{1} \prec_{\phi}
G_{2}$, $G_{2} \prec_{\phi} G_{1}$ and $\phi$ is symmetric, $\phi$
is called a bisimulation relation between $G_{1}$ and $G_{2}$,
denoted by $G_{1} \cong_{\phi} G_{2}$. We sometimes omit the
subscript $\phi$ from $\prec_{\phi}$ or $\cong_{\phi}$ when it is
clear from the context. Then we present a motivating example of
this paper.

\subsection{A Motivating Example}

\begin{figure}[!htb]
\begin{center}
\includegraphics*[scale=.5]{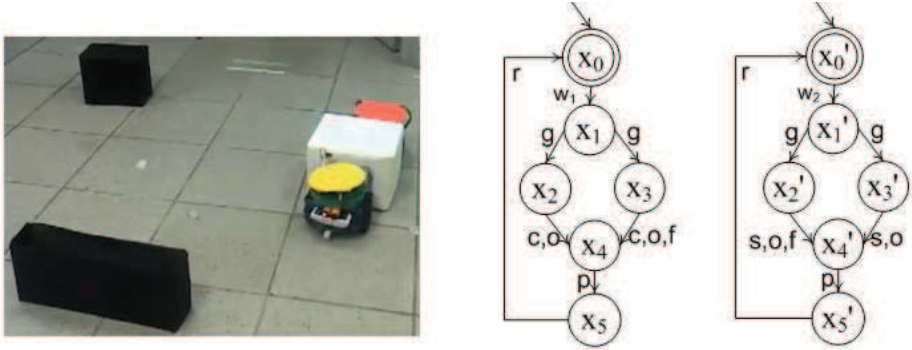}
\caption{ multi-robot system (MRS) (Left), $G_1$ (Middle) and
$G_2$(Right)} \label{plantdet}
\end{center}
\end{figure}
Consider a cooperative multi-robot system (MRS) configured in Fig.
\ref{plantdet} (Left). The MRS consists of two robots $R_1$ and
$R_2$. Both of them have the same communication, position,
pushing, scent-sensing and frequency-sensing capabilities.
Furthermore, $R_1$ has color-sensing capabilities, while $R_2$ has
shape-sensing capability. $R_1$ and $R_2$ can cooperatively search
and clear a dangerous object (the white cube) in the workspace.
Initially, $R_1$ and $R_2$ are positioned outside the workspace.
Let $i=1, 2$. When the work request announces (event $w_i$), $R_i$
is required to enter the workspace. Due to actuator limitations,
it nondeterministically goes along one of two pre-defined paths
(event $g$). In the first path, $R_1$ activates color-sensing
(event $c$) and scent-sensing (event $o$) capabilities to detect
the dangerous object; whereas in the second path, besides
color-sensing and scent-sensing capabilities, $R_1$ also activates
frequency-sensing (event $f$) for detection. Similarly, $R_2$
activates shape-sensing (event $s$), scent-sensing and
frequency-sensing capabilities in the first path, while in the
second path it activates shape-sensing and scent-sensing
capabilities. After detecting the dangerous object, $R_i$ pushes
the dangerous object outward the workspace (event $p$), and then
returns to the initial position (event $r$) for the next
implementation.

\begin{figure}[!htb]
\begin{center}
\includegraphics*[scale=.5]{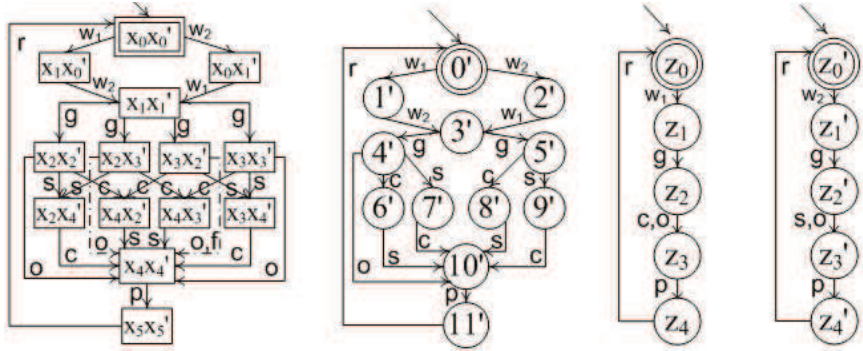}
\caption{$G_1||G_2$ (First Left), $R$ (Second Left), $S_1$ (Second
Right) and $S_2$ (First Right)} \label{spec}
\end{center}
\end{figure}

The automaton model $G_i$ of $R_i$ with alphabet $\Sigma_{i}$ is
shown in Fig. \ref{plantdet}, where $\Sigma_1=\{w_1, g, c, o, f,
p, r\}$ and $\Sigma_2=\{w_2, g, s, o, f, p, r\}$. Since $R_i$ can
not disable the host computer to broadcast the work announcement,
the event $w_i$ is deemed uncontrollable, that is $w_i \in
\Sigma_{uci}$. The rest events are controllable. The cooperative
behavior of $R_1$ and $R_2$ can be represented as $G_1||G_2$ (Fig.
\ref{spec} (First Left)). The specification $R$, configured in
Fig. \ref{spec}, is given in order to restrict the cooperative
behavior $G_1||G_2$. According to the specification, after both
$R_1$ and $R_2$ receive the work command and go to the workspace,
two possible states may be reached by the MRS
nondeterministically. In the first state, the color sensor, the
shape sensor and the scent sensors can be adopted to confirm an
objective is dangerous. However, to save the energy, in the second
state only the color sensor and the shape sensor can be adopted
for dangerous object detection. After the detection, the dangerous
object is cleared from the workspace.



\begin{figure}[!htb]
\begin{center}
\includegraphics*[scale=.5]{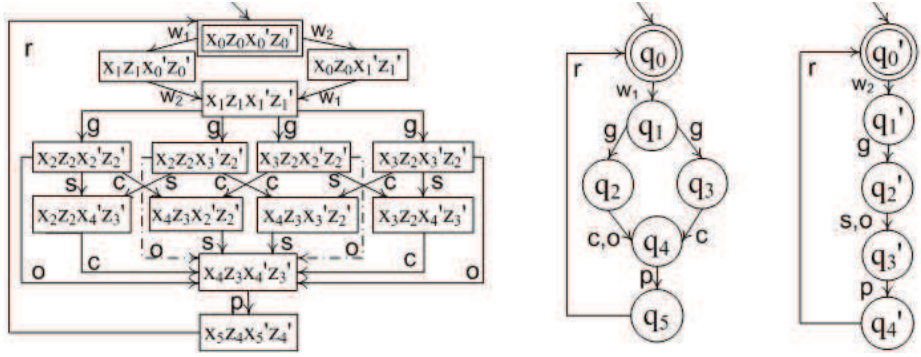}
\caption{$||_{i \in \{1, 2\}} G_i||S_i$ (Left), $R_{s_1}$ (Middle)
and $R_{s_2}$ (Right)} \label{proj}
\end{center}
\end{figure}


For such a MRS, if we use language equivalence as behavior
equivalence, the control target is to design supervisors $S_1$ and
$S_2$ such that $L(\parallel_{i \in \{1, 2\}}G_i||S_i)=L(R)$.
According to the results in \citep{willner1991supervisory}, this
problem can be solved by designing $S_i$ such that
$L(G_i||S_i)=P_{\Sigma_1\!\cup\!\Sigma_2\rightarrow
\Sigma_{i}}(L(R))$. Since $P_{\Sigma_1\!\cup\!\Sigma_2\rightarrow
\Sigma_{i}}(L(R))$ is language controllable with respect to
$L(G_i)$ and $\Sigma_{uci}$, we can construct $S_i$ as shown in
Fig. \ref{spec}. So the supervised system $||_{i \in \{1,
2\}}G_i||S_i$ (Fig. \ref{proj} (Left)) is language equivalent to
$L(R)$. However, it can be seen that $||_{i \in \{1, 2\}}G_i||S_i$
enables all the color sensor, the shape sensor and the scent
sensors for dangerous object detection, which violates the energy
saving requirement in the specification. Hence langauge
equivalence is not adequate for this case, which calls for the use
of bisimulation as behavior equivalence. That is, we need design
supervisor $S_i'$ such that $||_{i \in \{1, 2\}}G_i||S_i' \cong
R$. For such a bisimilarity control problem, a promising method
\citep{karimadini2011guaranteed} is to decompose the global
specification $R$ into sub-specifications $R_{s_i}$ with alphabet
$\Sigma_{i}$ for $R_i$ (Fig. \ref{proj}) such that $||_{i \in \{1,
2\}}R_{s_i}\cong R$ . If we can design $S_i'$ such that $G_i||S_i'
\cong R_{s_i}$, then $||_{i \in \{1, 2\}}G_i||S_i' \cong R$. In
particular, $R_{s_2}$ is deterministic, which motivates us to
consider the bisimilarity control for deterministic specifications
in this paper.


%

\subsection{Problem Formulation}
In the rest of paper, unless otherwise stated we will use $G=(X,
\Sigma, \alpha, x_{0}, X_m)$, $R=(Q, \Sigma, \delta, q_{0}, Q_m)$
and $S=(Y, \Sigma, \beta, y_0, Y_m)$ to denote the
nondeterministic plant, the deterministic specification and the
supervisor (possibly nondeterministic) respectively. Next we
formalize the notion of bisimilarity enforcing supervisor, which
always enables all uncontrollable events and enforces bisimilarity
between the supervised system and the specification.


\begin{Definition}\label{bissup}
Given a plant $G$ and a specification $R$, a supervisor $S$ is
said to be a bisimilarity enforcing supervisor for $G$ and $R$ if:

(1) There is a bisimulation relation $\phi$ such that $G||S
\cong_{\phi} R$;

(2) $(\forall y \in Y)(\forall \sigma \in \Sigma_{uc}) ~\beta(y,
\sigma) \neq \emptyset$.

\end{Definition}


This paper aims to solve the following problems.

{\bf Problem 1:} Given a nondeterministic plant $G$ and a
deterministic specification $R$, what condition guarantees the
existence of a bisimilarity enforcing supervisor $S$ for $G$ and
$R$?

{\bf Problem 2:} How to check this condition effectively?

{\bf Problem 3:} If the condition is satisfied, how to construct a
bisimilarity enforcing supervisor $S$?

{\bf Problem 4:} If the condition is not satisfied, how to obtain
a maximal permissive sub-specification which enables the synthesis
of bisimilarity enforcing supervisors?

\section{ Supervisory Control for Bisimilarity}




%
%
%

This section investigates Problem 1 and Problem 3, also called the
bisimilarity enforcing supervisor synthesis problem. We begin with
the existence condition of a bisimilarity enforcing supervisor.
For sufficiency, since we need design a bisimilarity enforcing
supervisor, the following concept is introduced.

\begin{Definition}
Given $G_1 =(X_1,\Sigma,x_{01},\alpha_1, X_{m1})$, the
uncontrollable augment automaton $G_{1uc}$ of $G_1$ is defined as:
\[
G_{1uc} =(X_1 \cup \{D_d \},\Sigma,x_{01},\alpha_{uc},X_{m1}),
\]
where for any $x \in X_1 \cup \{D_d \}$ and $\sigma \in \Sigma$:
\[
  \alpha_{uc}(x,\sigma) = \left\{ {\begin{array}{*{20}c}
   \alpha_1(x, \sigma) & \sigma \in E_{G_1}(x);  \\
   \{D_d\} & { (\sigma \in \Sigma_{uc} \!\setminus E_{G_1}(x)) \vee (x=D_d \wedge \sigma \in \Sigma_{uc})} ;  \\
   \emptyset & {otherwise.}\\
\end{array}} \right.
\]
\end{Definition}

We can see that an uncontrollable augment automaton can be
employed in the construction of bisimilarity enforcing supervisors
because it naturally satisfies the condition (2) required for a
bisimilarity enforcing supervisor (Definition \ref{bissup}).


On the other side, for necessity we have $G||S \cong R$, which
implies $R \prec G||S \prec G$. Hence $R \prec G$ is a necessary
condition to guarantee the existence of a bisimilarity enforcing
supervisor. Moreover, $G||S \cong R$ implies $L(G||S)=L(R)$, thus
language controllability of the specification is also a necessary
condition for the existence of a bisimilarity enforcing
supervisor. To satisfy those necessary conditions, we will
introduce synchronous simulation-based controllability as a
property of the specification. Before that, we need the following
concept.

\begin{Definition}
Given $G_{1} =(X_{1},\Sigma,x_{01},\alpha_{1},X_{m1})$, $G_{2}
=(X_{2},\Sigma,x_{02},\alpha_{2},X_{m2})$ and a simulation
relation $\phi$ such that $G_{1} \prec_{\phi} G_{2}$, $\phi$ is
called a synchronous simulation relation from $G_1$ to $G_2$ if
$(x_1, x_2) \in \phi$ for any $x_1 \in X_1$ and $x_2 \in
X_{synG_1G_2}(x_1)$.
\end{Definition}

If there exists a synchronous simulation relation $\phi$ from
$G_1$ to $G_2$, $G_1$ is said to be synchronously simulated by
$G_2$, denoted as $G_1 \prec_{syn\phi} G_{2}$. For a deterministic
specification $R$, if $R$ is synchronously simulated by $G$, then
$G$ possesses the branches which are bisimilar to $R$ and the
branches which are outside $L(R)$. Hence it turns out that $G||R
\cong R$. If $R$ is further language controllable with respect to
$L(G)$ and $\Sigma_{uc}$, then $G||R=G||R_{uc}$, implying that
$R_{uc}$ is a candidate of bisimilarity enforcing supervisor. Base
on this observation, we provide the following concept.



\begin{Definition}\label{syncdef}
Given $G_1 =(X_1,\Sigma,x_{01},\alpha_1,X_{m1})$ and $G_2
=(X_2,\Sigma,x_{02},\alpha_2,X_{m2})$, $G_1$ is said to be
synchronously simulation-based controllable with respect to $G_2$
and $\Sigma_{uc}$ if it satisfies:

(1) There is a synchronous simulation relation $\phi$ such that
$G_1 \prec_{syn\phi} G_{2}$;

(2) $L(G_1)$ is language controllable with respect to $L(G_2)$ and
$\Sigma_{uc}$.

\end{Definition}

It is immediate to see that when $R$ is synchronously
simulation-based controllable with respect to $G$ and
$\Sigma_{uc}$, it not only satisfies the necessary conditions ($R
\prec G$ and language controllability of $L(R)$) for the existence
of a bisimilarity enforcing supervisor but also enables the
development of $R_{uc}$ as a bisimilarity enforcing supervisor to
accomplish the sufficiency of the existence condition.

Then we present a necessary and sufficient condition for the
existence of a bisimilarity enforcing supervisor.

\begin{Theorem}\label{t}
Given a plant $G$ and a deterministic specification $R$, there
exists a bisimilarity enforcing supervisor $S$ for $G$ and $R$ if
and only if $R$ is synchronously simulation-based controllable
with respect to $G$ and $\Sigma_{uc}$.
\end{Theorem}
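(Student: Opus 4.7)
The plan is to prove sufficiency by constructing an explicit bisimilarity enforcing supervisor, and necessity by extracting a synchronous simulation from the postulated supervisor.

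For sufficiency, suppose $R$ is synchronously simulation-based controllable with respect to $G$ and $\Sigma_{uc}$. I would take the uncontrollable augment $R_{uc}$ as the candidate supervisor and verify Definition \ref{bissup}. Condition (2) holds by construction of $R_{uc}$, since every previously undefined uncontrollable event is rerouted to the dump state $D_d$. For condition (1), the argument factors into two steps. First, the synchronous simulation $R \prec_{syn\phi} G$ combined with the determinism of $R$ yields $G||R \cong R$: the synchronous clause forces every reachable product state $(x, q)$ to satisfy $(q, x) \in \phi$, and this lifts to a bisimulation $\{((x, q), q)\}$ because $R$'s determinism pins down the unique matching $q$-successor for each event. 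Second, language controllability of $L(R)$ with respect to $L(G)$ and $\Sigma_{uc}$ guarantees that the dump state $D_d$ is never visited in $G||R_{uc}$, so $G||R_{uc} = G||R$ as reachable automata, and the bisimilarity with $R$ is inherited.

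For necessity, suppose $S$ is a bisimilarity enforcing supervisor witnessed by a bisimulation $\phi^{*}$ between $G||S$ and $R$. The central intermediate fact is a reachability synchronization lemma: for every $s \in L(G||S)$ and every $x \in \alpha(x_{0}, s)$, there exists some $y \in Y$ with $(x, y) \in \alpha_{1||2}((x_{0}, y_{0}), s)$. This is established by induction on $|s|$, where the induction step invokes the $R$-to-$G||S$ direction of $\phi^{*}$ at string $s$ to conclude that the next event $\sigma$ is enabled in $S$ at the specific $y$ produced by the induction hypothesis, after which the parallel composition rule distributes over the full nondeterministic fan-out of $G$. Equipped with this lemma, I would set $\phi = \{(q, x) \mid x \in X_{synRG}(q)\}$, which is synchronous by definition. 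For the simulation clause, given $(q, x) \in \phi$ via some string $s$ and a specification transition $\delta(q, \sigma) = q'$, pick a witness $(x, y) \in \alpha_{1||2}((x_{0}, y_{0}), s)$ from the lemma, let bisimilarity supply a product transition $(x, y) \xrightarrow{\sigma} (x', y')$ with $(x', y') \phi^{*} q'$, and observe $x' \in \alpha(x, \sigma)$ and $(q', x') \in \phi$. The marked-state clause follows from the same witness and the marking correspondence induced by $\phi^{*}$. Language controllability drops out as a corollary: if $s \in L(R)$ and $s\sigma \in L(G)$ with $\sigma \in \Sigma_{uc}$, a witness $x \in \alpha(x_{0}, s)$ with $\sigma \in E_{G}(x)$ pairs via the lemma with some $y$; condition (2) of Definition \ref{bissup} forces $\sigma$ enabled at $y$ in $S$, whence $s\sigma \in L(G||S) = L(R)$.

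The principal obstacle is the synchronization lemma itself. A naive induction fails because $G||S$ may reach many $(x, y)$ pairs whose $S$-futures diverge; without leveraging bisimilarity with the deterministic $R$ to guarantee that the required $\sigma$-transition is available in $S$ at the particular $y$ delivered by the induction hypothesis, one cannot mirror the $G$-fan-out inside the product. The determinism of $R$ is essential at exactly this point, because it fixes a unique $q$ per string and thereby a unique bisimilarity obligation to be discharged on each branch of $G||S$.
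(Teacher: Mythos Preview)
Your approach is correct and essentially matches the paper's: sufficiency via $S = R_{uc}$ together with the bisimulation $\{((x,q), q)\} \cup \{(q,(x,q))\}$ on $G\|R$, and necessity by extracting a synchronous simulation from the assumed bisimulation using the determinism of $R$. The paper packages necessity slightly differently---it defines $\phi_1 = \{(q, x) \mid \exists y,\ (q, (x,y)) \in \phi\}$ (so the simulation clause is automatic from the parallel-composition projection) and then proves $\phi_1$ contains every synchronized pair, whereas you take $\phi = \{(q, x) \mid x \in X_{synRG}(q)\}$ (so synchrony is automatic) and prove the simulation clause---but the underlying computation is identical.

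One correction to your diagnosis: the ``reachability synchronization lemma'' is not the obstacle and needs neither induction nor bisimilarity. Because $G$ and $S$ share the alphabet $\Sigma$, parallel composition is the full Cartesian product on reachable states, $\alpha_{G\|S}((x_0, y_0), s) = \alpha(x_0, s) \times \beta(y_0, s)$, so for $s \in L(G\|S)$ any $y \in \beta(y_0, s)$ witnesses the lemma in one line; the paper uses exactly this. The step that genuinely requires determinism of $R$ is the one you leave implicit when you ``let bisimilarity supply a product transition'' at $(x, y)$: you must first know that $(q, (x, y))$ lies in the bisimulation. This holds because the $G\|S \to R$ direction of $\phi^{*}$ carries $(x, y)$ along $s$ to some $q'$ reached by $s$ in $R$, and determinism forces $q' = \delta(q_0, s) = q$; symmetry then gives $(q,(x,y)) \in \phi^{*}$. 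So the ``principal obstacle'' sits one step later than where you placed it.
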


\begin{proof}
For sufficiency, we choose $R_{uc}$ as the supervisor. Let
$G||R=(X_{||}, \Sigma, (x_0, q_0),$ $ \alpha_{||},X_{m||})$.
Consider a relation $\phi_1=\{((x, q), q) ~|~ (x, q) \in
X_{||}\}$. We show that $\phi_1 \cup \phi_1^{-1}$ is a
bisimulation relation from $G||R$ to $R$. First note that $((x_0,
q_0), q_0) \in \phi_1$. Pick $((x, q), q) \in \phi_1$ and $(x',
q') \in \alpha_{||}((x, q), \sigma)$, where $\sigma \in \Sigma$.
By the definition of parallel composition, we have $q' \in
\delta(q, \sigma)$, which implies $((x', q'), q') \in \phi_1$.
When $(x', q') \in X_{m||}$, then $q' \in Q_{m}$. On the other
side, pick $(q, (x, q)) \in \phi_1^{-1}$ and $q' \in \delta(q,
\sigma)$. Since $(x, q) \in X_{||}$ and there is a synchronous
simulation relation $\phi$ such that $R \prec_{syn\phi} G$, we
have $(q, x) \in \phi$. Then there is $x' \in \alpha(x, \sigma)$
such that $(q', x') \in \phi$, and if $q' \in Q_{m}$, then $x' \in
X_{m}$. It follows that $(x', q') \in \alpha_{||}((x, q), \sigma)$
and $(x', q') \in X_{m||}$ when $q' \in Q_{m}$. That is, $(q',
(x', q')) \in \phi_{1}^{-1}$. Hence $G || R \cong_{\phi_1 \cup
\phi_1^{-1}} R$. Moreover from determinism and language
controllability of $R$ and the fact that $R_{uc}$ adds every state
a transition to $D_d$ through undefined uncontrollable events does
not change the result of parallel composition, we have
$G||R_{uc}=G||R$. It implies that $G||R_{uc} \cong_{\phi_1 \cup
\phi_1^{-1}}R$.

For necessity, suppose there is a bisimilarity enforcing
supervisor $S$ for $G$ and $R$. Then, there is a bisimulation
relation $\phi'=\phi \cup \phi^{-1}$ such that $R \prec_{\phi}
G||S$ and $G||S \prec_{\phi^{-1}} R$. Let $G||S=(X_{G||S}, \Sigma,
(x_0, y_0), \alpha_{G||S},X_{mG||S})$. Consider a relation
$\phi_1=\{(q, x) \in Q \times X ~|~(\exists y \in Y)~(q, (x, y))
\in \phi \}$. We show that $\phi_1$ is a synchronous simulation
relation from $R$ to $G$. By the definition of parallel
composition, $\phi_1$ is a simulation relation from $R$ to $G$.
Assume there is $q \in Q$ and $x' \in X_{synRG}(q)$ such that $(q,
x') \notin \phi_1$. Hence there exists $s \in \Sigma^{*}$ such
that $q \in \delta(q_0, s)$ and $x' \in \alpha(x_0, s)$. Since $R
\prec_{\phi} G||S$, for $q \in \delta(q_0, s)$, there is $(x, y)
\in \alpha_{G||S}((x_0, y_0), s)$ such that $(q, (x, y)) \in
\phi$, which implies $y \in \beta(y_0, s)$ and in turn implies
$(x', y) \in \alpha_{G||S}((x_0, y_0),s)$. Because $G||S
\prec_{\phi^{-1}} R$, for $(x', y) \in \alpha_{G||S}((x_0,
y_0),s)$, there is $q' \in \delta(q_0, s)$ such that $((x', y),
q') \in \phi^{-1}$. Since $R$ is deterministic, we have $q=q'$.
Therefore, $(q, (x', y)) \in \phi$, which implies $(q, x') \in
\phi_1$. It introduces a contradiction. Then the assumption is not
correct. That is, for any $q \in Q$ and $x \in X_{synRG}(q)$, $(q,
x) \in \phi_1$. So $R \prec_{syn\phi_1} G$. Next we show language
controllability of $L(R)$. Since a bisimilarity enforcing
supervisor $S$ enables all uncontrollable events at each state,
$L(G||S)$ is language controllable with respect to $L(G)$ and
$\Sigma_{uc}$, further, $G||S \cong R$ implies $L(G||S)=L(R)$. It
follows that $L(R)$ is language controllable w.r.t. $L(G)$ and
$\Sigma_{uc}$. So $R$ is synchronously simulation-based
controllable w.r.t. $G$ and $\Sigma_{uc}$.
\end{proof}

\begin{Remark}
Theorem \ref{t} shows that if a deterministic $R$ is synchronously
simulation-based controllable with respect to $G$ and
$\Sigma_{uc}$, $R_{uc}$ is a bisimilarity enforcing supervisor for
$G$ and $R$. Here synchronous simulation-based controllability of
$R$ is equivalent to the conditions ($G||det(R)\cong R$ and
language controllability of $L(R)$) specialized to deterministic
specifications \citep{zhoubisimilarity2011} to ensure the
existence of a deterministic bisimilarity supervisor. However, the
notion of synchronous simulation-based controllability offers
computation advantages compared to the conditions in
\citep{zhoubisimilarity2011} (See section 4). Moreover, it enables
the calculation of maximal permissive sub-specification when the
existence condition for a bisimilarity enforcing supervisor does
not hold (See section 5).
\end{Remark}

\begin{figure}[!htb]
\begin{center}
\includegraphics*[scale=.5]{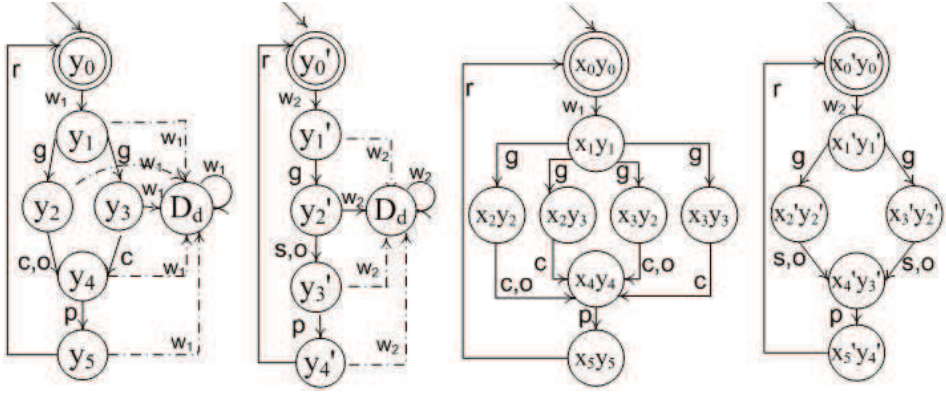}
\caption{ $S_1'$ (First Left), $S_2'$ (Second Left), $G_1||S_1'$
(Second Right) and $G_2||S_2'$ (First Right)} \label{moticls}
\end{center}
\end{figure}

Now we revisit the motivating example.

\begin{Example}\label{lcbis}
Let $i\!=1,2$. We need design supervisor $S_{i}'$ such that
$G_i||S_i' \cong R_{s_i}$. Since $R_{s_2}$ is deterministic and
synchronously simulation-based controllable with respect to $G_2$
and $\Sigma_{uc2}\!=\{w_2\}$, from Theorem \ref{t} we can design
$(R_{s_2})_{uc}$ to be $S_2'$ (Fig. \ref{moticls} (Second Left)).
The supervised system $G_2||S_2'$ is shown in Fig. \ref{moticls}
(First Right) and it can be seen that $G_2||S_2'\! \cong_{\phi
\cup \phi^{-1}} R_{s_2}$, where $\phi\!=\{(q_0', (x_0', y_0')),
(q_1', (x_1', y_1')),(q_2', (x_2', y_2')), (q_2', $ $(x_3',
y_2')),(q_3', (x_4', y_3')), (q_4', (x_5', y_4'))\}$. In addition,
$S_1'$ for $G_1$ can be designed as shown in Fig. \ref{moticls}
(First Left) according to our results in
\citep{sun2012bisimilarityacc}. Then $G_1||S_1' \cong R_{s_1}$
(Fig. \ref{moticls} (Second Right)). As a result, $||_{i \in \{1,
2\}}G_i||S_i' \cong R$.
\end{Example}

\section{A Test Algorithm for the Existence of a Bisimilarity
Enforcing Supervisor}

To solve Problem 2, an algorithm is proposed in this section to
test the existence of a bisimilarity enforcing supervisor. We
start by introducing synchronously simulation-based controllable
product, which will be used in the test algorithm.

\begin{Definition}
Given $G_1 =(X_1,\Sigma,x_{01},\alpha_1, X_{m1})$ and $G_2
=(X_2,\Sigma,x_{02},\alpha_2,X_{m2})$, the synchronously
simulation-based controllable product of $G_1$ and $G_2$ is an
automaton
\[
G_1 ||_{sync} G_2 = ( (X_1 \times X_2) \cup \{q_d, q_d'\}, \Sigma,
\alpha_{12}, (x_{01}, x_{02}), X_{m1} \times X_{m2}),
\]
where for any $(x_1, x_2) \in X_1 \times X_2$ and $\sigma \in
\Sigma$, the transition function is defined as:

\[
\alpha_{12}\!((x_1\!, \!x_2),\!\sigma) = \left\{
{\begin{array}{*{20}c}
   \!\alpha_1(\!x_1, \!\sigma) \! \times\! \alpha_2\!(x_2,\! \sigma) \! & \! {\sigma \in E_{G_1}(x_1)\cap E_{G_2}(x_2) };  \\
   q_d & {\sigma \! \in \! E_{G_1}(x_1)\! \setminus \! E_{G_2}(x_2)};  \\
   q_d' & {\sigma \! \in \! \Sigma_{uc}\cap(E_{G_2}(x_2)\! \setminus E_{G_1}(x_1))}; \\
   \emptyset & {otherwise}. \\
\end{array}} \right.
\]
\end{Definition}

Since synchronous simulation-based controllability is a necessary
and sufficient condition for the existence of a bisimilarity
enforcing supervisor, the following algorithm for testing
synchronous simulation-based controllability of $R$ also verifies
the existence of a bisimilarity enforcing supervisor for $G$ and
$R$.


\begin{Algorithm}\label{algsync}
Given a plant $G$ and a deterministic specification $R$, the
algorithm for testing synchronous simulation-based controllability
of $R$ with respect to $G$ and $\Sigma_{uc}$ is described as
below.

Step 1: Obtain $ R ||_{sync} G=(X_{sync}, \Sigma, \alpha_{sync},
(q_0, x_0), X_{msync})$;

Step 2: $R$ is synchronously simulated-based controllable with
respect to $G$ and $\Sigma_{uc}$ if and only if $q_d$ and $q_d'$
are not reachable in $R||_{sync} G$ and $x \in X_{m}$ for any
reachable state $(q, x)$ in $R||_{sync} G$ with $q \in Q_{m}$.
\end{Algorithm}


\begin{Theorem}\label{alg1c}
Algorithm 1 is correct.
\end{Theorem}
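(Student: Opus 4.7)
The plan is to prove the two directions separately by exhibiting a canonical candidate simulation relation built from the reachable states of $R\|_{sync} G$. The pivotal structural observation, which I would establish first as a short lemma, is that a state $(q,x)\in Q\times X$ is reachable in $R\|_{sync}G$ if and only if $x\in X_{synRG}(q)$. One direction is a straightforward induction on the length of a string witnessing $x\in X_{synRG}(q)$: at each step the product uses the first branch of $\alpha_{12}$ (since both $R$ and $G$ can fire) and carries $(q_i,x_i)$ forward. The converse is immediate from the fact that whenever the product stays inside $Q\times X$, both components have advanced synchronously along the same string.

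For sufficiency, assume $q_d$ and $q_d'$ are unreachable and the marking requirement holds. I would let $\phi=\{(q,x)\in Q\times X\mid (q,x)\text{ reachable in }R\|_{sync}G\}$ and verify the two clauses of Definition \ref{syncdef}. The synchronous coverage condition $x\in X_{synRG}(q)\Rightarrow(q,x)\in\phi$ is exactly the structural observation above. To see that $\phi$ is a simulation, pick $(q,x)\in\phi$ and $\sigma\in\Sigma$ with $q'\in\delta(q,\sigma)$; if $\sigma\notin E_G(x)$, the second branch of $\alpha_{12}$ would put $q_d$ on the reachable set, which contradicts the hypothesis, so $\sigma\in E_G(x)$ and the first branch yields $\delta(q,\sigma)\times\alpha(x,\sigma)\subseteq\phi$, matching any $x'\in\alpha(x,\sigma)$. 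The marking preservation is exactly the second clause of Step 2 in Algorithm \ref{algsync}. For language controllability, suppose $s\in\overline{L(R)}$ and $s\sigma\in L(G)$ with $\sigma\in\Sigma_{uc}$; by determinism of $R$ the state $q=\delta(q_0,s)$ is unique, and there exists $x\in\alpha(x_0,s)$ with $\sigma\in E_G(x)$, so $(q,x)$ is reachable. If $\sigma\notin E_R(q)$ then the third branch would make $q_d'$ reachable, again a contradiction; hence $s\sigma\in\overline{L(R)}$.

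For necessity, assume $R$ is synchronously simulation-based controllable via some $\phi$. I would argue by contradiction. If $q_d$ is reachable, trace the string $s$ that reaches it: the predecessor $(q,x)$ lies in $Q\times X$ with $x\in X_{synRG}(q)$ by the structural lemma, hence $(q,x)\in\phi$ by the synchronous property; but the triggering event $\sigma\in E_R(q)\setminus E_G(x)$ forces $\phi$ to match $q'\in\delta(q,\sigma)$ with some $x'\in\alpha(x,\sigma)=\emptyset$, which is impossible. If $q_d'$ is reachable via $s\sigma$, then $s\in L(R)\cap L(G)$, $\sigma\in\Sigma_{uc}$, and $\sigma\in E_G(x)\setminus E_R(q)$; language controllability of $L(R)$ would force $\sigma\in E_R(q)$, a contradiction. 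The marking condition follows directly from $(q,x)\in\phi$ and the second clause of the simulation definition.

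The main obstacle, and the only step that really needs care, is the structural lemma identifying reachable product states with the synchronized state map, because it is what links the purely syntactic reachability check in the algorithm to the semantic requirement ``$(q,x)\in\phi$ for all $x\in X_{synRG}(q)$'' in Definition \ref{syncdef}. Once that correspondence is pinned down, the equivalence of the algorithm's three checks with the two clauses of synchronous simulation-based controllability falls out by matching each kind of bad transition in $\alpha_{12}$ to the specific violation it witnesses: the $q_d$ branch encodes a simulation failure, the $q_d'$ branch encodes an uncontrollability, and the marking check encodes the marked-state compatibility of the simulation.
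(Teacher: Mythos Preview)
Your proposal is correct and follows essentially the same route as the paper: both rest on identifying the reachable $Q\times X$-states of $R\|_{sync}G$ with the pairs $(q,x)$ satisfying $x\in X_{synRG}(q)$, and then matching each of the three algorithmic checks ($q_d$ unreachable, $q_d'$ unreachable, marking compatibility) to the corresponding clause of Definition~\ref{syncdef}. The paper argues the biconditional more tersely via a case analysis on which clause fails, whereas you separate the two directions and explicitly exhibit $\phi$ as the set of reachable product states, but the underlying argument is the same.
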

\begin{proof}
From the definition of synchronously simulation-based controllable
product, it is obvious that any $(q, x)$ satisfying $x \in
X_{synRG}(q)$ is a state reachable in $R||_{sync}G$, and any $(q,
x) \in X_{sync}\!\setminus \!\{q_d, q_d'\}$ satisfies that $x \in
X_{synRG}(q)$. For synchronous simulation-based controllability to
hold, condition (1) and condition (2) of Definition \ref{syncdef}
should be satisfied. On the other hand, if condition (1) is
violated, there are two cases. Case 1: there exist $(q, x)$ and
$\sigma \in \Sigma$ such that $x \in X_{synRG}(q)$ and $\sigma \in
E_{R}(q) \!\setminus E_{G}(x)$. So $q_d \in \alpha_{sync}((q, x),
\sigma)$. Case 2: there is $(q, x)$ such that $x \in X_{synRG}(q)$
and $x \notin X_{m}$ when $q \in Q_{m}$. If condition (2) is
violated, i.e. there exist $(q, x)$ and $\sigma \in \Sigma_{uc}$
such that $x \in X_{synRG}(q)$ and $\sigma \in E_{G}(x)
\!\setminus E_{R}(q)$. So $q_d' \in \alpha_{sync}((q, x),
\sigma)$. It follows that $q_d$ and $q_d'$ are reachable in
$R||_{sync} G$ or $x \notin X_{m}$ for any reachable state $(q,
x)$ in $R||_{sync} G$ with $q \in Q_{m}$ iff $R$ is not
synchronously simulated-based controllable w.r.t. $G$ and
$\Sigma_{uc}$.
\end{proof}

\begin{Remark}
Algorithm 1 can be terminated because the state sets and the event
sets of $R$ and $G$ are finite. Since $G$ is nondeterministic and
$R$ is deterministic, their numbers of transitions are $O(|X|^{2}
|\Sigma|)$ and $O(|Q||\Sigma|)$ respectively. Then the complexity
of constructing $R||_{sync}G$ is $O(|X|^{2}|Q|^{2} |\Sigma|)$. In
addition, the complexity of checking the reachability of $q_d$ and
$q_d'$ in $R||_{sync}G$ is $O(log(|X||Q|))$
\citep{jones1975space}. So the complexity of Algorithm 1 is
$O(|X|^{2} |Q|^{2} |\Sigma|)$. That is, the algorithm for testing
the existence of a bisimilarity enforcing supervisor has
polynomial complexity. \cite{zhoubisimilarity2011} used the
conditions such as $G||det (R) \cong R$ and $L(R)$ is language
controllable with respect to $L(G)$ and $\Sigma_{uc}$ to guarantee
the existence of a deterministic supervisor that achieves
bisimulation equivalence. The complexity of verifying those
conditions with respect to deterministic specifications is
$O(|X|^{2}|Q|^{2}|\Sigma|^{3}log (|X||Q|^{2}))$ (Remark 2 in
\citep{zhoubisimilarity2011}). Hence, we argue that Algorithm 1 is
more effective.
\end{Remark}


We provide the following example to illustrate the algorithm for
checking synchronous simulation-based controllability.

\begin{figure}[!htb]
\begin{center}
\includegraphics*[scale=.5]{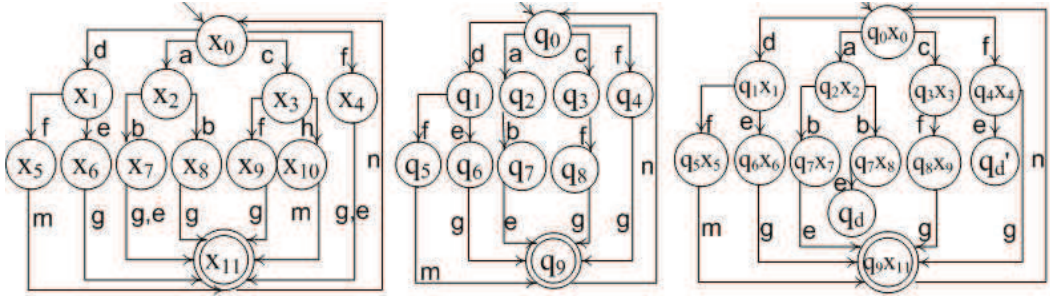}
\caption{ Plant $G$ (Left), Specification $R$ (Middle) and
$R||_{sync} G$ (Right) of Example 2} \label{chk1}
\end{center}
\end{figure}

\begin{Example}\label{checksync}
Consider a plant $G$ and a specification $R$ with
$\Sigma_{uc}=\{b, e\}$ configured in Fig. \ref{chk1}. We can see
that $R$ is not synchronously simulation-based controllable with
respect to $G$ and $\Sigma_{uc}$ because for $f \in L(G)\cap L(R)$
and $e \in \Sigma_{uc}$, $fe \in L(G)\!\setminus L(R)$, and $e$ is
defined at $q_7$ but not $x_8 \in X_{synRG}(q_7)$.

Next we use Algorithm \ref{algsync} to test synchronously
simulation-based controllability of $R$. The synchronously
simulation-based controllable product $R||_{sync}G$ is shown in
Fig. \ref{chk1} (Right). It can be seen that $q_d$ and $q_d'$ are
reachable in $R||_{sync}G$. Hence $R$ is not synchronously
simulation-based controllable with respect to $G$ and
$\Sigma_{uc}$.
\end{Example}

\section{Supremal Synchronously Simulation-Based Controllable Sub-specifications}
This section studies Problem 4, i.e., the synthesis of supremal
synchronously simulation-based controllable sub-specifications,
because a synchronous simulation-based controllable
sub-specification ensures the existence of a bisimilarity
enforcing supervisor. First we introduce the notion of supremal.

Given $(A, \leq)$ and $A' \subseteq A$, where $\leq \subseteq A
\times A$ is a transitive and reflexive relation over $A$, $x \in
A$ is said to be a supremal of $A'$, denoted by $supA'$, if it
satisfies:

(1) $\forall y \in A'$: $y \leq x$;

(2) $\forall z \in A: [\forall y \in A': y \leq z] \Rightarrow [x
\leq z]$.

When we define the supremal of $A'$, a set $(A, \leq)$ should be
given with respect to the element of $A'$. If the elements of $A'$
are languages, the set $(2^{\Sigma^{*}}, \subseteq)$ should be
applied because $2^{\Sigma^{*}}$ includes all languages over
alphabet $\Sigma$ and language inclusion fully captures the
comparison between two languages. However, if the elements of $A'$
are automata, the set $(B, \prec)$ should be applied, where $B$ is
a full set of automata with alphabet $\Sigma$ and $\prec \subseteq
B \times B$ is the simulation relation, since $B$ includes all
automata over alphabet $\Sigma$ and the simulation relation is
adequate for automata (possibly nondeterministic) comparison.


We consider the class of sub-specifications that satisfies
synchronous simulation-based controllability as below.
\begin{eqnarray}
C_1 &:=& \{R' ~|~R'~ is ~deterministic, R' \prec R ~and ~R'~ is~ synchronous~ \nonumber\\
 & &  simulation-based ~controllable~ w.r.t.~ G ~and~
\Sigma_{uc}\} \nonumber
\end{eqnarray}

It can be seen that the supremal of $C_1$ with respect to $(B,
\prec)$ is a supremal synchronously simulation-based controllable
sub-specification. However, it is difficult to directly calculate
the supremal of $C_1$ because $C_1$ is not closed under the upper
bound (join) operator with respect to $(B, \prec)$
\citep{zhoubisimilarity2011}. To encounter this problem, we would
like to convert the automaton set $C_1$ into equivalently
expressed language sets which are closed under the upper bound
(set union) operator with respect to $(2^{\Sigma^{*}}, \subseteq)$
\citep{cassandras2008introduction}. Next we do this conversion
item by item. First, for two deterministic automata $R'$ and $R$,
the condition $R' \prec R$ is equivalent to the language condition
$L(R') \subseteq L(R)$ and $L_{m}(R') \subseteq L_{m}(R)$. Second,
language controllability required in synchronous simulation-based
controllability is naturally a language description. It remains to
convert synchronous simulation relation required in synchronous
simulation-based controllability to an equivalent language
condition. To complete the conversion, we need the following
concept.


\begin{Definition}
Given $G=(X, \Sigma, x_{0}, \alpha, X_{m})$, the synchronous state
merger operator on $G$ is defined as an automaton
\[
F_{syn}(G) = (X_{syn},\Sigma, \{x_{0}\},\alpha_{syn},X_{msyn}),
\]
where $X_{syn} = 2^{X}$, $X_{msyn}=\{ Y_1 ~|~Y_1 \subseteq X_{m}
\}$, and for any $A \in X_{syn}$ and $\sigma \in \Sigma$, the
transition function is defined as:


\[
 \alpha_{syn}(A,\sigma) = \left\{ {\begin{array}{*{20}c}
  \cup_{x \in A} \alpha(x, \sigma) & {\sigma \in \cap_{ x \in A}E_{G}(x) };  \\
   undefined & {otherwise}. \\
\end{array}} \right.
\]
\end{Definition}




By using $F_{syn}(G)$, the synchronous simulation relation from a
deterministic automaton $G_1$ to a plant $G$ is equivalent to
language conditions $L(G_1) \subseteq L(F_{syn}(G))$ and
$L_{m}(G_1) \subseteq L_{m}(F_{syn}(G))$, which is illustrated by
the following proposition.

\begin{Proposition}\label{gsyn}
Given a plant $G$ and a deterministic automaton $G_1$, there is a
synchronous simulation relation $\phi$ such that $G_{1}
\prec_{syn\phi} G$ iff $L(G_1) \subseteq L(F_{syn}(G))$ and
$L_m(G_1) \subseteq L_{m}(F_{syn}(G))$.
\end{Proposition}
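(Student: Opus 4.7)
The plan is to prove both directions using a single natural relation, the ``reached-by-common-string'' relation
\[
\phi := \{(x_1, x) \in X_1 \times X \mid \exists s \in \Sigma^* \text{ with } x_1 = \alpha_1(x_{01}, s) \text{ and } x \in \alpha(x_0, s)\},
\]
together with the observation that in $F_{syn}(G)$, the unique subset reached from $\{x_0\}$ along a string $s \in L(F_{syn}(G))$ is exactly $A_s := \alpha(x_0, s)$. This identification follows by an easy induction on $|s|$ using the definition $\alpha_{syn}(A,\sigma) = \cup_{x\in A}\alpha(x,\sigma)$, and it also shows that the transition on $\sigma$ is defined at the reached subset precisely when $\sigma \in E_G(x)$ for every $x \in A_s$.

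For the ``only if'' direction, I would fix a synchronous simulation $\phi'$ with $G_1 \prec_{syn\phi'} G$ and pick $s \in L(G_1)$. Writing $s_i$ for the length-$i$ prefix and $x_1^{(i)} := \alpha_1(x_{01}, s_i)$ (well-defined by determinism of $G_1$), I would argue inductively that the $F_{syn}(G)$-state after $s_i$ is $A_{s_i}$ and that $s(i{+}1) \in E_G(x)$ for every $x \in A_{s_i}$. The synchronous clause supplies $(x_1^{(i)}, x) \in \phi'$ for each such $x$, and the ordinary simulation clause applied to the $G_1$-transition $x_1^{(i)}\xrightarrow{s(i+1)}x_1^{(i+1)}$ forces $\sigma = s(i{+}1)$ to be enabled at every $x \in A_{s_i}$, so the $F_{syn}(G)$-transition extends; this gives $s \in L(F_{syn}(G))$. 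The marked case is analogous: if $x_1^{(|s|)} \in X_{m1}$, condition (2) of the simulation relation forces every $x \in A_s$ into $X_m$, so $A_s \in X_{msyn}$ and $s \in L_m(F_{syn}(G))$.

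For the ``if'' direction, I take $\phi$ as defined above and check the three required properties. The pair $(x_{01}, x_0) \in \phi$ is immediate with $s = \epsilon$. For the transition clause, given $(x_1, x) \in \phi$ witnessed by $s$ and a $G_1$-transition $x_1' = \alpha_1(x_1, \sigma)$, we have $s\sigma \in L(G_1) \subseteq L(F_{syn}(G))$, so by the characterisation of $F_{syn}(G)$ the event $\sigma$ is active at every element of $A_s$; in particular $\sigma \in E_G(x)$, and any $x' \in \alpha(x,\sigma)$ witnesses $(x_1', x') \in \phi$ via the string $s\sigma$. For the marking clause, $x_1 \in X_{m1}$ gives $s \in L_m(G_1) \subseteq L_m(F_{syn}(G))$, hence $A_s \subseteq X_m$ and so $x \in X_m$. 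The synchronous condition is built into the definition of $\phi$: any $x \in X_{synG_1G}(x_1)$ comes from some common string $s$, which is exactly what places $(x_1, x)$ in $\phi$.

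The main obstacle, and the step worth presenting carefully, is the bookkeeping that identifies the $F_{syn}(G)$-state reached by $s$ with $A_s$ and translates ``the transition is defined'' into ``$\sigma$ is enabled in every state of $A_s$''. Once that dictionary is fixed, the simulation clauses translate transparently into the language inclusions $L(G_1) \subseteq L(F_{syn}(G))$ and $L_m(G_1) \subseteq L_m(F_{syn}(G))$, and the equivalence becomes a matter of unpacking definitions. Determinism of $G_1$ is used throughout to make the witnessing string uniquely determine $x_1$.
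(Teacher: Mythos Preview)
Your proposal is correct and follows essentially the same approach as the paper's own proof: both use the relation $\phi=\{(x_1,x)\mid x\in X_{synG_1G}(x_1)\}$ for the ``if'' direction and an induction on string length for the ``only if'' direction, with the key observation that the state reached in $F_{syn}(G)$ along $s$ is exactly $\alpha(x_0,s)$. Your write-up is in fact a bit more explicit than the paper's about this last identification, which is helpful, but the underlying argument is the same.
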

\begin{proof}
Let $F_{syn}(G) = (X_{f},\Sigma, \{x_{0}\},\alpha_{f},X_{mf})$,
$G_{1}=(X_1, \Sigma, x_{01}, \alpha_1, X_{m1})$ and
$G_{L}=G_{1}||G=(X_{L},\Sigma, (x_{01}, x_0),\alpha_{L},X_{mL})$.
For sufficiency, consider a relation $\phi=\{(x_1, x) \in X_1
\times X~|~x \in X_{synG_{1}G}(x_1) \}$. We show that $\phi$ is a
synchronous simulation relation from $G_{1}$ to $G$. First note
that $(x_{01}, x_0) \in \phi$. Pick $(x_1, x) \in \phi$ and $x_1'
\in \alpha_{1}(x_1, \sigma)$, where $\sigma \in \Sigma$. Since $x
\in X_{synG_{1}G}(x_1)$, there is $s \in \Sigma^{*}$ such that
$x_1 \in \alpha_1(x_{01}, s)$ and $x \in \alpha(x_0, s)$. Hence
$s, s\sigma \in L(G_1)$, moreover, $L(G_1)\subseteq
L(F_{syn}(G))$. It follows that $s, s\sigma \in L(F_{syn}(G))$.
Therefore there exist $A = \alpha_{f}(\{x_{0}\}, s)$ and $A_1 =
\alpha_{f}(A, \sigma)$. By the definition of $F_{syn}(G)$, we have
$x \in A$ and $\sigma \in \cap_{x''\in A}E_{G}(x'')$, which
implies there is $x' \in \alpha(x, \sigma)$ such that $x' \in
X_{synG_1G}(x_1')$, i.e. $(x_1', x') \in \phi$. Next we show that
$x_1 \in X_{m1}$ implies $x \in X_{m}$. Because $x_1 \in X_{m1}$,
we have $s \in L_m(G_1)$, in addition, $L_m(G_1) \subseteq
L_{m}(F_{syn}(G))$. It follows $s \in L_{m}(F_{syn}(G))$, that is
$A \subseteq X_{m}$, implying $x \in X_{m}$. So
$G_{1}\prec_{syn\phi}G$.

For necessity, the induction method is used to prove $s\in
L(F_{syn}(G))$ for any $s \in L(G_1)$, that is $L(G_1) \subseteq
L(F_{syn}(G))$. (1) $|s|=0$, then $s=\epsilon$. It is obvious that
$\epsilon \in L(F_{syn}(G))$. (2) Assume when $|s|=n$, we have $s
\in L(F_{syn}(G))$ for any $s \in L(G_1)$. (3) $|s|=n+1$. Let
$s=s_1\sigma$, where $\sigma \in \Sigma$. Because $s_1\sigma \in
L(G_1)$ and $G_{1}$ is deterministic, for any $x_2 \in
\alpha_{1}(x_{01}, s_1)$, we have $\sigma \in E_{G_{1}}(x_2)$.
Since $G_{1} \prec_{syn\phi} G$, for any $x'' \in \alpha(x_0,
s_1)$, we have $(x_2, x'') \in \phi$. It follows that $\sigma \in
\cap_{x'' \in \alpha(x_0, s_1)} E_{G}(x'')$. In addition,
$|s_1|=n$ implies $s_1 \in L(F_{syn}(G))$, which in turn implies
there is $A_1 = \alpha_{f}(\{x_0\}, s_1)$ such that $x'' \in A_1$.
Hence $A_2=\alpha_{f}(A_1, \sigma)=\cup_{x''\in A_1} \alpha(x'',
\sigma)$, that is, $s_1\sigma \in L(F_{syn}(G))$. Therefore for
any $s \in L(G_1)$, we have $s \in L(F_{syn}(G))$, i.e. $L(G_1)
\subseteq L(F_{syn}(G))$. Next we show $L_m(G_1) \subseteq
L_{m}(F_{syn}(G))$ by proving $s' \in L_{m}(F_{syn}(G))$ for any
$s' \in L_m(G_1)$. Since $s' \in L_m(G_1)$, there is $x_4
 \in \alpha_1(x_{01}, s')$ such that $x_4 \in X_{m1}$. Because
$G_{1} \prec_{syn\phi} G$ implies $(x_4, x''') \in \phi$ for any
$x''' \in \alpha(x_0, s')$, we have $x''' \in X_{m}$. Definition
of $F_{syn}(G)$ implies $s' \in L_{m}(F_{syn}(G))$, i.e. $L_m(G_1)
\subseteq L_{m}(F_{syn}(G))$.
\end{proof}

Hence the automaton set $C_1$ can be converted into the following
langauge sets:
\begin{eqnarray}
C_2 &:=& \{L_1 \subseteq L(R) \cap L(F_{syn}(G))~|~
L_1=\overline{L_1}
~and~ L_1 ~is ~language ~controllable \nonumber \\
& &  ~w.r.t.~ L(G) ~and~ \Sigma_{uc}\}; \nonumber \\
C_3 &:=& \{L_1 \cap L_{m}(R) \cap L_{m}(F_{syn}(G))~|~L_1 \in
C_2\}. \nonumber
\end{eqnarray}

The computation of supremal synchronously simulation-based
controllable sub-specification, i.e., $supC_1$, with respect to
$(B, \prec)$, can be achieved through the computation of the
supremal languages of $C_2$ and $C_3$ with respect to
$(2^{\Sigma^{*}}, \subseteq)$ as shown in the following theorem.



\begin{Theorem}\label{supeq}
Given a plant $G$ and a deterministic specification $R$, if
$supC_2 \neq \emptyset$, then $G_{(supC_2,\! supC_3)}$ $\in
supC_1$.
\end{Theorem}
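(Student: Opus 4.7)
The plan is to split the argument into two parts: first verify that $G_{(supC_2, supC_3)}$ actually belongs to $C_1$ (membership), and then verify that it dominates every other element of $C_1$ under $\prec$ (supremality). For the comparison, I will exploit the fact that for deterministic automata the simulation preorder coincides with language (and marked language) inclusion, so the automaton-level problem reduces to a language-level one, which is exactly why the sets $C_2$ and $C_3$ were introduced.

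For the membership step, I would first argue that $supC_2$ really lies in $C_2$: the defining conditions of $C_2$ (subset of $L(R)\cap L(F_{syn}(G))$, prefix-closure, and language controllability with respect to $L(G)$ and $\Sigma_{uc}$) are each preserved under arbitrary unions, so $C_2$ is closed under join in $(2^{\Sigma^*},\subseteq)$ and $supC_2=\bigcup_{L_1\in C_2}L_1\in C_2$. By construction $G_{(supC_2,supC_3)}$ is deterministic with $L(G_{(supC_2,supC_3)})=supC_2\subseteq L(R)$ and $L_m(G_{(supC_2,supC_3)})=supC_3\subseteq L_m(R)$, so $G_{(supC_2,supC_3)}\prec R$. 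Language controllability of $supC_2$ with respect to $L(G)$ and $\Sigma_{uc}$ is inherited from $C_2$. Finally, since $supC_2\subseteq L(F_{syn}(G))$ and $supC_3\subseteq L_m(F_{syn}(G))$, Proposition \ref{gsyn} gives a synchronous simulation relation from $G_{(supC_2,supC_3)}$ to $G$, completing synchronous simulation-based controllability and hence $G_{(supC_2,supC_3)}\in C_1$.

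For the supremality step, take an arbitrary $R'\in C_1$. Because $R'$ and $R$ are both deterministic, $R'\prec R$ gives $L(R')\subseteq L(R)$ and $L_m(R')\subseteq L_m(R)$; $L(R')$ is prefix-closed and language controllable by hypothesis; and Proposition \ref{gsyn} applied to the synchronous simulation $R'\prec_{syn}G$ yields $L(R')\subseteq L(F_{syn}(G))$ and $L_m(R')\subseteq L_m(F_{syn}(G))$. These give $L(R')\in C_2$, hence $L(R')\subseteq supC_2$. Setting $L_1:=L(R')$ in the definition of $C_3$ shows that $L(R')\cap L_m(R)\cap L_m(F_{syn}(G))\in C_3$, and $L_m(R')$ is contained in this set, so $L_m(R')\subseteq supC_3$. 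Determinism of $G_{(supC_2,supC_3)}$ turns this pair of language inclusions into the simulation $R'\prec G_{(supC_2,supC_3)}$, establishing the supremality clause of the definition.

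The main technical obstacle will be ensuring that $supC_2$ genuinely attains its value in $C_2$ and that $supC_3$ as defined is consistent with the marked language one needs for $G_{(supC_2,supC_3)}$ to serve as a witness. The former reduces to the classical fact that language controllability is preserved under arbitrary unions, which I will just invoke; the latter amounts to showing that intersection distributes over the union defining $supC_2$, giving $supC_3=supC_2\cap L_m(R)\cap L_m(F_{syn}(G))$, and then verifying $supC_3\subseteq supC_2$ so that the deterministic automaton $G_{(supC_2,supC_3)}$ is well defined. Once these bookkeeping points are nailed down, both halves of the argument are essentially direct applications of Proposition \ref{gsyn} combined with the determinism of $R$ and of the candidate supremal automaton.
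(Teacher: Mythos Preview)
Your proposal is correct and follows essentially the same route as the paper: prove membership of $G_{(supC_2,supC_3)}$ in $C_1$ via Proposition~\ref{gsyn} and language inclusion, then establish supremality by showing every $R'\in C_1$ satisfies $L(R')\in C_2$ and $L_m(R')\subseteq supC_3$, whence determinism yields $R'\prec G_{(supC_2,supC_3)}$. The only cosmetic differences are that you argue supremality directly while the paper phrases it by contradiction, and you are more explicit than the paper about why $supC_2\in C_2$ (closure of controllability and prefix-closure under union) and why $supC_3=supC_2\cap L_m(R)\cap L_m(F_{syn}(G))$.
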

\begin{proof}
Let $L_1\!=supC_2\!\neq \emptyset$ and $L_1'\!=supC_2 \!\cap
L_{m}(R) \!\cap L_{m}(F_{syn}(G))\!=supC_3$. First we show that
$G_{(L_1, L_1')}\!\in C_1$. Since $L_1 \!= supC_2$, we have $L_1
\!\in C_2$, which implies $L_1$ is language controllable w.r.t.
$L(G)$ and $\Sigma_{uc}$ and $L_1  \!\subseteq L(F_{syn}(G))$. In
addition, definition of $L_1'$ implies $L_1' \!\subseteq
L_{m}(F_{syn}(G))$. From Proposition \ref{gsyn}, it follows that
$G_{(L_1, L_1')}$ is synchronously simulation-based controllable
w.r.t. $G$ and $\Sigma_{uc}$. Since $L_1 \! \in C_2$ also implies
$L_1 \!\subseteq L(R)$ and $L_1' \!\subseteq L_{m}(R)$ and $R$ and
$G_{(L_1, L_1')}$ are deterministic, we have $G_{(L_1, L_1')}\!
\prec R$. Therefore, $G_{(L_1, L_1')}\! \in C_1$. Next we show
that $R_1 \!\prec G_{(L_1, L_1')}$ for any $R_1 \in C_1$. Suppose
there is $R_1 \! \in C_1$ such that $R_1 \! \nprec G_{(L_1,
L_1')}$. Since $R_1 \! \in C_1$, it implies $R_1 \! \prec R$,
moreover, $R_1$ and $R$ are deterministic. It follows that $L(R_1)
\! \subseteq L(R)$ and $L_{m}(R_1) \! \subseteq L_{m}(R)$. In
addition, $R_1 \! \in C_1$ also implies synchronous
simulation-based controllability of $R_1$. Hence $L(R_1)$ is
language controllable with respect to $L(G)$ and $\Sigma_{uc}$ and
there is a synchronous simulation relation $\phi$ such that $R_1
\! \prec_{syn\phi} G$ implying $L(R_1) \! \subseteq L(F_{syn}(G))$
and $L_{m}(R_1) \!\subseteq L_{m}(F_{syn}(G))$ according to
Proposition \ref{gsyn}. Hence $L(R_1)\! \in C_2$. Moreover,
$L_{m}(R_1) \! \subseteq L(R_1)$. By the definition of supremal,
we have $L(R_1) \! \subseteq supC_2\!= L_1$ and $L_{m}(R_1) \!
\subseteq supC_3 \!=L_1'$, further, $R_1$ and $G_{(L_1, L_1')}$
are deterministic. It follows that $R_1\!\prec G_{(L_1, L_1')}$,
which introduces a contradiction. Hence, the assumption is not
correct. That is, we have $R_1 \! \prec G_{(L_1, L_1')}$ for any
$R_1 \!\in C_1$. So $G_{(L_1, L_1')} \!=G_{(supC_2, supC_3)} \!
\in supC_1$.

\end{proof}


%
%
Next we present a recursive algorithm for computing the supremal
synchronously simulation-based controllable sub-specification.

\begin{Algorithm}\label{alg2}
Given a plant $G$ and a deterministic specification $R$, the
algorithm for computing the supremal synchronously
simulation-based controllable sub-specification with respect to
$G$ and $\Sigma_{uc}$ is described as follows:

Step 1: Obtain $det(G)=(X_{det}, \Sigma, x_{0det}, \alpha_{det},
X_{mdet})$, $G'=(F_{syn}(G)||R)_{uc}=(X', \Sigma, x_{0}', \alpha',
X_{m}')$ and $G''=G'||$ $det(G)=(X'', \Sigma, x_{0}'', \alpha'',
X_{m}'')$;

Step 2: $Z_0:=\{(x_1', x_2) \in X' \times X_{det}~|~x_1'=D_d\}$;

Step 3: $\forall k \geq 0$, $Z_{k+1}=Z_{k} \cup \{z \in
X''-Z_{k}~|~(\exists \sigma \in \Sigma_{uc}) ~ \alpha''(z, \sigma)
\in Z_{k} \}$;

Step 4: If $Z_{k+1}=Z_{k} \neq Z$, then the subautomaton
$F_{G''}(X''-Z_{k})$ of $G''$ is a supremal synchronously
simulation-based controllable sub-specification with respect to
$G$ and $\Sigma_{uc}$.
\end{Algorithm}



\begin{Theorem}\label{calz}
Algorithm \ref{alg2} is correct.
\end{Theorem}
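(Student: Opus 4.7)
The plan is to show that the subautomaton produced by the algorithm is exactly $G_{(supC_2,\,supC_3)}$, which by Theorem \ref{supeq} belongs to $sup\,C_1$. To do this, I will (a) interpret the construction of $G''$ as a language-theoretic container for all prefixes of $L(R)\cap L(F_{syn}(G))\cap L(G)$, augmented with explicit ``bad'' transitions into $D_d$ wherever an uncontrollable event is defined in $L(G)$ but not in $L(R)\cap L(F_{syn}(G))$; (b) show the fixed point $Z$ of the backward iteration equals the set of states from which some uncontrollable string reaches $D_d$; (c) conclude that $L(F_{G''}(X''-Z))$ is the supremal element of $C_2$ and the marked language is $supC_3$.

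First I would verify the semantic content of $G''$. Since $F_{syn}(G)$, $R$ and $det(G)$ share alphabet $\Sigma$, parallel composition coincides with synchronous product, so $L(F_{syn}(G)\|R) = L(F_{syn}(G))\cap L(R)$ and $L(G'\|det(G)) \setminus \{\text{strings into }D_d\} = L(F_{syn}(G))\cap L(R)\cap L(G)$. The uncontrollable augment adds a transition $(x',x_{det})\xrightarrow{\sigma}(D_d,\cdot)$ precisely when $\sigma\in\Sigma_{uc}$ is enabled in $det(G)$ from the current state but not in $F_{syn}(G)\|R$; this is exactly the situation where continuing with $\sigma$ would escape $L(R)\cap L(F_{syn}(G))$ while remaining in $L(G)$, i.e.\ would violate language controllability.

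Next I would analyze the iteration. By a straightforward induction on $k$, $Z_k$ equals the set of states of $G''$ from which some uncontrollable string of length at most $k$ reaches a state containing $D_d$. Hence the fixed point $Z$ is the set of states from which some uncontrollable string in $L(G)$ leaves $L(R)\cap L(F_{syn}(G))$. Let $H := F_{G''}(X''-Z)$ and set $K := L(H)$, $K_m := L_m(H)$. Then $K \subseteq L(R)\cap L(F_{syn}(G))$ by construction and $K$ is prefix-closed. Language controllability of $K$ w.r.t. $L(G)$ and $\Sigma_{uc}$ follows because any uncontrollable continuation of $s\in K$ that is in $L(G)$ corresponds to a transition in $G''$ from a state in $X''-Z$; if it led out of $K$, the target would be in $Z_0$ or a $Z_k$, forcing the source itself into $Z$, a contradiction. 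Thus $K\in C_2$ and $K_m \in C_3$.

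Finally I would establish maximality. For any $K'\in C_2$ the states visited by tracing $K'$ inside $G''$ must avoid $Z$: if some $s\in K'$ reached a state $z\in Z$, then by definition of $Z$ there is an uncontrollable string $u$ with $su\in L(G)$ whose trajectory hits $D_d$, meaning $su\notin L(R)\cap L(F_{syn}(G))\supseteq K'$, contradicting language controllability of $K'$. Therefore $K'\subseteq K$, and similarly $K'\cap L_m(R)\cap L_m(F_{syn}(G))\subseteq K_m$. Hence $K = supC_2$, $K_m=supC_3$, and Theorem \ref{supeq} yields $H = G_{(supC_2,\,supC_3)} \in sup\,C_1$. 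The main obstacle will be step (b): carefully matching the ``backward-reachable via $\Sigma_{uc}$'' iteration on $G''$ with the language-level notion of a controllability violation, and in particular verifying that the transitions into $D_d$ inserted by the uncontrollable augment capture every possible controllability failure without introducing spurious ones through the $det(G)$ factor.
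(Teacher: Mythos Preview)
Your proposal is correct and follows essentially the same route as the paper. Both arguments first verify that $L(F_{G''}(X''-Z))\in C_2$ by construction, then establish maximality by a contradiction: if some $s\in L_2\in C_2$ were not in this language, a prefix of $s$ would lie in some $Z_{k'}$, hence an uncontrollable continuation in $L(G)$ would leave $L(R)\cap L(F_{syn}(G))$, violating language controllability of $L_2$; finally both invoke Theorem~\ref{supeq}. Your explicit characterization of $Z_k$ as the states backward-reachable to $D_d$ via $\Sigma_{uc}$-strings of length at most $k$ is exactly what the paper uses implicitly when it picks $s_2\in\Sigma_{uc}^*$ leading from $Z_{k'}$ to $Z_0$.
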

\begin{proof}
Consider $R''\!=F_{G''}(X''\!-Z_{k})\!=(Q'', \Sigma, q_0'',
\delta'', Q_{m}'')$, where $Z_{k+1}\!=Z_{k}\!\neq Z$ with $k
\!\geq $ $0$. First we show that $L(R'') \!\in C_2$. Definition of
$Z_{k}$ implies $L(R'')$ is language controllable w.r.t. $L(G)$
and $\Sigma_{uc}$, and the fact that $L(det(G))\!=L(G)$ implies
$L(R'') \!\subseteq L(F_{syn}$ $ (G)) \!\cap L(R)$ and $L_{m}(R'')
\!\subseteq \!L_{m}(F_{syn}(G)) \!\cap L_{m}(R)$. It follows that
$L(R'') \!\in C_2$. Next we show that $L_2 \!\subseteq L(R'')$ for
any $L_2 \! \in C_2$. Suppose there is $L_2 \! \in C_2$ such that
$L_2 \!\nsubseteq L(R'')$, that is, there is $s \!\in \Sigma^{*}$
such that $s \!\in L_2 \!\setminus L(R'')$. Since $s \! \notin
L(R'')$, there exists $s_1 \!\in \overline{\{s\}}$ such that
$(x_1', x_1) \!\in Z_{k'}$, where $x_1' \!\in \alpha'(x_0', s_1)$,
$x_1 \!\in \alpha_{det}(x_{0det}, s_1)$ and $k'\!=0, 1, \cdots k$.
Hence there is $s_2 \in \Sigma_{uc}^{*}$ such that $x_2' \!\in
\alpha'(x_1', s_2)$ and $x_2 \!\in \alpha_{det}(x_1, s_2)$ with
$(x_2', x_2) \!\in Z_{0}$, which implies $s_1s_2 \!\in L(G)
\!\backslash L(F_{syn}(G)||R)$. Moreover,
$L(F_{syn}(G)||R)\!=L(F_{syn}(G)) \!\cap L(R)$ and $L_2
\!\subseteq L(F_{syn}(G)) \!\cap L(R)$. It follows that $s_1s_2
\!\notin L_2$. If $s_2 \!= \epsilon$, then $s_1 \!\notin L_2$,
which implies $s \!\notin L_2$. If $s_2 \!\neq \epsilon$, then
$s_1s_2(1)\cdots s_2(|s_2|-1) \!\notin L_2$ because $L_2$ is
language controllable w.r.t. $L(G)$ and $\Sigma_{uc}$, $s_2(|s_2|)
\!\in \Sigma_{uc}$ and $s_1s_2 \!\in L(G)\!\setminus L_2$. It in
turn follows that $s_1s_2(1)\!\cdots s_2(|s_2|-2) \!\notin L_2$,
$s_1s_2(1)\!\cdots s_2(|s_2|-3) \!\notin L_2$, $\cdots$, $s_1
\!\notin L_2$. Hence $s \!\notin L_2$. So there is a
contradiction, which implies the assumption is not correct. Then
$L_2 \!\subseteq L(R'')$ for any $L_2 \!\in C_2$. As a result,
$L(R'')\!=supC_2$. It remains to show that $L_{m}(R'')\!=supC_3$.
By the definition of $R''$ and the fact that $L_{m}(F_{syn}(G))
\!\subseteq L_{m}(G)$, we have $L_{m}(R'')\!=L(R'')\!\cap
L_{m}(F_{syn}(G))\! \cap L_{m}(R)\!=supC_2 \!\cap L_{m}$
$(F_{syn}(G)) \!\cap L_{m}(R)\!=supC_3$. It follows that $R''$ is
a deterministic automaton such that $L(R'')\!=supC_2$ and
$L_{m}(R'')$ $=\!supC_3$. By Theorem \ref{supeq}, we have $R''
\!\in supC_1$.

\end{proof}

\begin{Remark}
Algorithm \ref{alg2} can be terminated because the state set $X''$
is finite. Because the state numbers of $F_{syn}(G)$ and $det(G)$
are both $O(2^{|X|})$. Therefore, the complexity of Algorithm
\ref{alg2} is $O(2^{2|X|}|Q||\Sigma|)$.
\end{Remark}





Furthermore, the supremal synchronously simulation-based
controllable sub-specification can be calculated by formulas
without applying the recursive algorithm.

\begin{Theorem}\label{calsupsub}
Given a plant $G$ and a deterministic specification $R$, if
$M=L(R)\cap L(F_{syn}(G))-[(L(G)-L(R) \cap
L(F_{syn}(G)))/\Sigma_{uc}^{*}]\Sigma^{*}\neq\emptyset$, then
$G_{(M, M')}$ is a supremal synchronously simulation-based
controllable sub-specification with respect to $G$ and
$\Sigma_{uc}$, where $M'=M \cap L_{m}(R) \cap L_{m}(F_{syn}(G))$.
\end{Theorem}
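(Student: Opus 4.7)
The plan is to reduce the theorem to Theorem \ref{supeq} by identifying $M$ as $supC_2$ and $M'$ as $supC_3$, and then invoking the automaton-level conclusion that $G_{(supC_2,supC_3)} \in supC_1$. Write $K := L(R) \cap L(F_{syn}(G))$, so that the formula becomes $M = K - [(L(G)-K)/\Sigma_{uc}^{*}]\Sigma^{*}$; structurally this is the classical Ramadge--Wonham formula for the supremal prefix-closed controllable sublanguage of $K$ with respect to $L(G)$ and $\Sigma_{uc}$. Once $M = supC_2$ is established, the identity $M' = M \cap L_m(R) \cap L_m(F_{syn}(G))$ matches the definition of $C_3$, so $M' = supC_3$ follows, and Theorem \ref{supeq} immediately yields $G_{(M,M')}$ as a supremal synchronously simulation-based controllable sub-specification.

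First I would verify $M \in C_2$. The containment $M \subseteq K$ is definitional. For prefix closure, if $s \in M$ and $s_1 \in \overline{\{s\}}$, then $s_1 \in K$ since $K$ is prefix-closed, and $s_1 \in [(L(G)-K)/\Sigma_{uc}^{*}]\Sigma^{*}$ would force $s \in [(L(G)-K)/\Sigma_{uc}^{*}]\Sigma^{*}$, contradicting $s \in M$. For language controllability, let $s \in M$, $\sigma \in \Sigma_{uc}$, $s\sigma \in L(G)$. If $s\sigma \notin K$, then taking $t = \sigma \in \Sigma_{uc}^{*}$ exhibits $s \in (L(G)-K)/\Sigma_{uc}^{*}$, contradicting $s \in M$. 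If $s\sigma \in K$ but $s\sigma = bw$ with $b \in (L(G)-K)/\Sigma_{uc}^{*}$, a short case split on $|b|$ (either $b$ is already a prefix of $s$, in which case $s \in [(L(G)-K)/\Sigma_{uc}^{*}]\Sigma^{*}$, or $b = s\sigma$, in which case prepending $\sigma$ to the witnessing uncontrollable tail of $b$ gives a witness for $s$) again contradicts $s \in M$. Hence $s\sigma \in M$.

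Next, to prove $M = supC_2$ it suffices to take any $L_1 \in C_2$ and any $s \in L_1$ and show $s \in M$. Clearly $s \in K$, so I only need $s \notin [(L(G)-K)/\Sigma_{uc}^{*}]\Sigma^{*}$. Assume otherwise and write $s = bw$ with $b \in (L(G)-K)/\Sigma_{uc}^{*}$. Then $b$ is a prefix of $s$, so $b \in L_1$ by prefix closure. Pick $t \in \Sigma_{uc}^{*}$ with $bt \in L(G) - K \subseteq L(G) - L_1$. An induction on the length of $t$, applying the language controllability of $L_1$ one uncontrollable event at a time, keeps each prefix of $bt$ inside $L_1$ and ultimately yields $bt \in L_1 \subseteq K$, a contradiction. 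Thus $L_1 \subseteq M$, and combined with $M \in C_2$ this gives $M = supC_2$. The identification $M' = supC_3$ is then immediate from the definition of $C_3$, and Theorem \ref{supeq} completes the argument. The main obstacle I anticipate is precisely the controllability step for $M$: the composite $/\Sigma_{uc}^{*}$-then-$\Sigma^{*}$ construction is what insulates $M$ from uncontrollable escape sequences of arbitrary length, and spelling out the two subcases of how a putative ``bad'' decomposition of $s\sigma$ could arise is the delicate combinatorial heart of the proof.
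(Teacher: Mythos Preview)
Your proposal is correct and follows the same overall architecture as the paper: identify $M$ with $supC_2$, read off $M'=supC_3$ from the definition of $C_3$, and then invoke Theorem~\ref{supeq}. The only difference is in how the key identity $M=supC_2$ is justified. The paper simply cites the closed-form formulas of Brandt et al.\ (Theorems~1 and~2 of \citep{brandt1990formulas}) for the supremal prefix-closed controllable sublanguage of $K=L(R)\cap L(F_{syn}(G))$ with respect to $L(G)$ and $\Sigma_{uc}$, so its proof is a three-line appeal to that reference. You instead unpack that citation and give a direct, self-contained verification that $M\in C_2$ and that every $L_1\in C_2$ is contained in $M$; your two-case analysis for controllability of $M$ and the induction along the uncontrollable tail $t$ for the supremality direction are exactly the standard arguments underlying the Brandt formula. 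So your route is more elementary but not structurally different: it reproves the cited lemma rather than replacing it.
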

\begin{proof}
According to Theorem 1 and Theorem 2 in
\citep{brandt1990formulas}, we obtain $supC_2=L(R) \cap
L(F_{syn}(G))-[(L(G)-L(R) \cap
L(F_{syn}(G)))/\Sigma_{uc}^{*}]\Sigma^{*}=M$. It follows that
$M'=supC_3$. From Theorem \ref{supeq}, $G_{(M, M')}$ is a supremal
synchronously simulation-based controllable sub-specification
w.r.t. $G$ and $\Sigma_{uc}$.
\end{proof}


Now we revisit Example \ref{checksync}.

\begin{figure}[!htb]
\begin{center}
\includegraphics*[scale=.5]{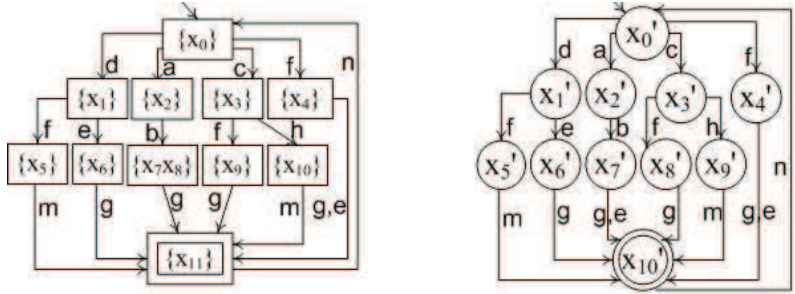}
\caption{ $F_{syn}(G)$ (Left) and $det(G)$ (Right)} \label{chk2}
\end{center}
\end{figure}

\begin{Example}\label{sub2}
Example \ref{checksync} indicates that $R$ is not synchronously
simulation-based controllable with respect to $G$ and
$\Sigma_{uc}$. Thus, we would like to calculate the supremal
synchronously simulation-based controllable sub-specification with
respect to $G$ and $\Sigma_{uc}$ by the proposed methods.

(1) Recursive Method: From Algorithm \ref{alg2}, we establish
$F_{syn}(G)$ and $det(G)$, shown in Fig. \ref{chk2}. Then
$G''\!=(X'', \Sigma, x_{0}'', \alpha'',
X_{m}'')\!=(F_{syn}(G)||R)_{uc}||det(G)$ is achieved in (Fig.
\ref{chksup} (Left)). We obtain $Z_0\!=\!\{(D_{d}, x_{10}')\}$,
$Z_1\!=Z_0 \!\cup \!\{ (\{x_7, x_8\}, q_7, x_7'), (\{x_4\},q_4,
x_4')\}$ \\ and $Z_2\!= \!Z_1 \! \cup \! \{(\{x_2\}, q_2,
x_2')\}\!=\!Z_3$. Therefore, the supremal synchronously
simulation-based controllable sub-specification
$F_{G''}(X''\!-\!Z_{2})$ is obtained in Fig. \ref{chksup}.

\begin{figure}[!htb]
\begin{center}
\includegraphics*[scale=.5]{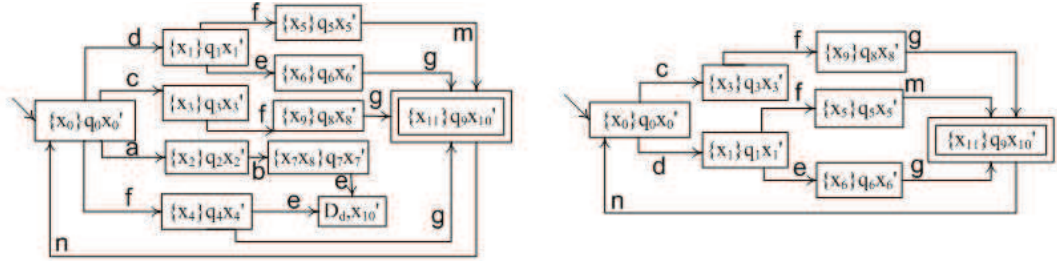}
\caption{ $(F_{syn}(G)|| R)_{uc}||det(G)$ (Left) and
$F_{G''}(X''-Z_2)$ (Right)} \label{chksup}
\end{center}
\end{figure}

(2) Formula-based Method: First we construct $F_{syn}(G)$, which
can be seen in Fig. \ref{chk2} (Left). Hence $L(R) \cap
L(F_{syn}(G)) =\overline{(d(fm+eg)n+cfgn+fgn)^{*}ab}$. Thus,
$M=L(R) \cap L(F_{syn}(G))-[(L(G)-L(R) \cap
L(F_{syn}(G)))/\Sigma_{uc}^{*}]\Sigma^{*}$=$\overline{(d(fm+eg)n+cfgn
}$\\$\overline{+fgn)^{*}ab}$-$(d(fm+eg)n+cfgn+fgn)^{*}ab\Sigma^{*}$
-$(d(fm+eg)n+cfgn+fgn)^{*}a\Sigma^{*}$-$(d(fm+eg)n+cfgn+fgn)^{*}f\Sigma^{*}$
=$\overline{(d(fm+eg)n+cfgn)^{*}}\neq\emptyset$ and $M'=M \cap
L_{m}(R) \cap
L_{m}(F_{syn}(G))$=$(d(fm+eg)n+cfgn)^{*}(d(fm+eg)+cfg)$. The
supremal synchronously simulation-based controllable
sub-specification $G_{(M, M')}\!=F_{G''}(X''-Z_2)$ is achieved in
Fig. \ref{chksup} (Right).
\end{Example}

\section{Conclusion}
In this paper, we investigated the bisimilarity enforcing
supervisory control of nondeterministic plants for deterministic
specifications. A necessary and sufficient condition for the
existence of a bisimilarity enforcing supervisor was deduced from
synchronous simulation-based controllability of the specification,
which can be verified by a polynomial algorithm. For those
specifications fulling the existence condition, a bisimilarity
enforcing supervisor has been constructed. Contrarily, when the
existence condition does not hold, a recursive method and a
formula-based method have been developed to calculate the maximal
permissive sub-specifications.


\bibliographystyle{model5-names}
\bibliography{reference}
\end{document}